\documentclass[10pt, conference, compsocconf]{IEEEtran}
\usepackage{amsmath}
\usepackage{amssymb}
\usepackage{comment}
\usepackage{amsthm}
\usepackage{url}
\usepackage{blkarray}
\usepackage{cite}
\usepackage{algorithm}
\usepackage{algorithmic}

\newtheorem{proposition}{Proposition}

\newtheorem{lemma}{Lemma}

\theoremstyle{definition}
\newtheorem{definition}{Definition}

\renewcommand{\vec}[1]{\mathbf{#1}} 
\newcommand{\vecg}[1]{\boldsymbol{#1}} 
\newcommand{\mat}[1]{\mathbf{#1}} 
\usepackage{color}

\newcommand{\sect}{Section~}
\usepackage{graphicx}

\begin{document}

\title{Resilience Bounds of Sensing-Based Network Clock Synchronization}

\author{\IEEEauthorblockN{Rui Tan $\qquad$ Linshan Jiang $\qquad$ Arvind Easwaran $\qquad$ Jothi Prasanna Shanmuga Sundaram}
\IEEEauthorblockA{School of Computer Science and Engineering, Nanyang Technological University, Singapore}
}

\maketitle

\begin{abstract}
  Recent studies exploited external periodic synchronous signals to synchronize a pair of network nodes to address a threat of delaying the communications between the nodes. However, the sensing-based synchronization may yield faults due to nonmalicious signal and sensor noises. This paper considers a system of $N$ nodes that will fuse their peer-to-peer synchronization results to correct the faults. Our analysis gives the lower bound of the number of faults that the system can tolerate when $N$ is up to 12. If the number of faults is no greater than the lower bound, the faults can be identified and corrected. We also prove that the system cannot tolerate more than $N-2$ faults. Our results can guide the design of resilient sensing-based clock synchronization systems.
\end{abstract}


\begin{IEEEkeywords}
Clock synchronization, fault tolerance, bounds
\end{IEEEkeywords}

\IEEEpeerreviewmaketitle

\section{Introduction}
\label{sec:intro}

For distributed systems such as sensor networks, accurate clock synchronization among the distributed nodes is important. Correct timestamps make sense data; synchronized clocks enable punctual coordinated operations among the nodes. In contrast, desynchronized clocks will undermine system performance and even lead to physical damages and system disruptions in time-critical systems. However, various factors present significant challenges to maintain resilient clock synchronization of distributed systems, such as large network sizes, deep embedding of the nodes into complex physical environments with various disturbances, and exposure of the systems to cybersecurity threats.



Network Time Protocol (NTP) \cite{mills1991internet} is the foremost means of clock synchronization that is widely known and adopted. Its design principle of estimating the offset between the clocks of a pair of nodes based on the network transmission delays of the synchronization packets is also a basis for many other clock synchronization protocols such as Precision Time Protocol (PTP) \cite{4579760} for industrial Ethernets and RBS \cite{elson2002fine}, TPSN \cite{ganeriwal2003timing}, and FTSP \cite{maroti2004flooding} for sensor networks. However, as discussed in RFC 7384 \cite{rfc7384}, the NTP principle is susceptible to various cybersecurity threats. While most of the vulnerabilities can be solved by conventional security measures such as cryptographic authentication and encryption, a simple packet delay attack that delays the transmissions of the synchronization packets has remained as an open issue that cannot be solved by conventional security measures \cite{rfc7384,mizrahi2012game,ullmann2009delay}.

To address the packet delay attack, our previous studies \cite{viswanathan2016exploiting,rabadi2017taming} have developed sensing-based clock synchronization approaches exploiting external periodic signals that are practically difficult for the attacker to tamper with or jam. Specifically, in \cite{viswanathan2016exploiting}, the minute fluctuations of the power grid voltage cycle lengths, which are similar across a geographic area served by the same power grid, are used as a time fingerprint to develop a clock synchronization approach that is secure against the packet delay attack. In \cite{rabadi2017taming}, the power grid voltage phase, which is nearly identical anytime within a city-scale power grid, is integrated into the NTP principle and achieve the security against the packet delay attack as long as a verifiable condition is satisfied.

These sensing-based approaches focus on the peer-to-peer (p2p) clock synchronization for a node pair. Although they well address the cybersecurity concern regarding the packet delay attack, they may be susceptible to the process noises of the external signals and sensor hardware noises/faults. For instance, as shown in \cite{viswanathan2016exploiting}, an insufficiently long time fingerprint may lead to faults in estimating the clock offset between a pair of nodes. In \cite{rabadi2017taming}, when the round-trip time of an NTP synchronization session exceeds twice of the power grid voltage cycle, the approach will yield multiple clock offset estimates, causing ambiguity.
Given the criticality of trustworthy clock synchronization, it is important to develop methods with understood resilience bounds to deal with the nonmalicious synchronization faults of the sensing-based clock synchronization approaches.

In this paper, based on a general class of p2p sensing-based clock synchronization, we study the resilience of {\em network clock synchronization} for a network of $N$ nodes against the p2p synchronization faults. Upon the occurrence of a fault between a pair of nodes, the measured offset between the two nodes' clocks will have an error of a multiple of the period of the used external signal. In the network clock synchronization, every node pair in the network performs a p2p clock synchronization session and returns the measured clock offset to a central node.
Based on a total of $N \choose 2$ clock offset measurements, the central node uses an algorithm
to estimate the offsets of all nodes' clocks from a selected reference node's clock, while accounting for the possible p2p synchronization faults. Specifically, each step of the algorithm assumes that $k$ out of totally $N \choose 2$ p2p synchronization sessions are faulty, exhaustively tests all possible ${N \choose 2} \choose k$ distributions of these faulty p2p synchronization sessions, and yields a solution once the estimated clock offsets and the estimated p2p clock synchronization faults agree with all the p2p clock offset measurements. Starting from $k=0$, the algorithm increases $k$ by one in each step and terminates once a solution is found. Thus, this algorithm does not require any run-time knowledge about the p2p synchronization faults, including the number of the faults and their distribution among the $N \choose 2$ p2p synchronization sessions.

\begin{table}
  \centering
  \caption{Lower bound of tolerable faults.}
  \label{tab:lower-bound}
  \begin{tabular}{c|ccccccccc}
    \hline
    $N$ & 4 &  5 & 6 & 7 & 8 & 9 & 10 & 11 & 12 \\
    \hline
    Lower bound of & 1 & 1 & 2 & 2 & 2 & 3 & 4 & 5 & 5 \\
    tolerable faults & & & & & & & & & \\
    \hline
    Lower bound of & 17 & 10 & 13 & 10 & 7 & 8 & 7 & 9 & 8 \\
    tolerance (\%) & & & & & & & & & \\
    \hline
  \end{tabular}
\end{table}

Based on the algorithm, we inquire basic questions regarding the scaling laws of system resilience,
such as how many p2p synchronization faults that any $N$-node system can tolerate in that the algorithm will not give wrong estimates of the clock offsets and the p2p clock synchronization faults.
Our analysis gives the lower bound of the number of p2p synchronization faults that any $N$-node system can tolerate when $N$ is up to 12. The result is given in Table~\ref{tab:lower-bound}. If the number of faults is no greater than the lower bound, the faults can be identified and corrected by the algorithm. By defining the tolerance as the ratio between the number of tolerable faults to the total number of p2p synchronization sessions, the third row of Table~\ref{tab:lower-bound} shows the lower bound of the tolerance. Our results can guide the design of network clock synchronization systems with potential p2p synchronization faults. Moreover, we prove that any $N$-node system with $N \ge 3$ cannot tolerate more than $N-2$ p2p synchronization faults.

When the number of faults is greater than the lower bound given in Table~\ref{tab:lower-bound} and no greater than the $N-2$ upper bound, whether the system can tolerate the faults is still an open issue. It is of great interest for future research to explore the tight bound of the fault tolerance.

The remainder of this paper is organized as follows. \sect\ref{sec:related} reviews related work. \sect\ref{sec:problem} introduces the background and states the problem. \sect\ref{sec:analysis} analyzes the resilience bounds.
\sect\ref{sec:conclude} concludes the paper.


\section{Related Work}
\label{sec:related}


Highly stable time sources are often ill-suited for sensor networks. Despite initial study of using chip-scale atomic clock (CSAC) on sensor platforms \cite{dongare2017pulsar}, CSAC is still too expensive (\$1,500 per unit \cite{dongare2017pulsar}) for wide adoption. The Global Positioning System (GPS) and several timekeeping radio stations (e.g., WWVB in U.S.) can provide highly stable global time. However, GPS and radio receivers have various limitations such as high power consumption, poor signal reception in indoor environments (e.g., 47\% good time for WWVB \cite{chen2011ultra}), and susceptibility to wireless spoofing attacks \cite{nighswander2012gps}. Thus, GPS and radio receivers are often used on a limited number of time masters with clear sky views, carefully installed antennas, and sufficient physical air gap to provide global time to a large number of slave nodes via some clock synchronization protocol (e.g., NTP). The resilience of this clock synchronization  protocol between the master and the slaves is the focus of this paper.

Various sensing-based approaches exploit external periodic signals for clock synchronization \cite{viswanathan2016exploiting,rabadi2017taming,yan2017}, time fingerprinting \cite{lukac2009recovering,gupchup2009sundial,viswanathan2016exploiting,li2017natural}, and clock calibration \cite{rowe2009low,li2012flight,hao2011,li2011exploiting}. Time fingerprinting approaches focus on studying the global time information embedded in the sensing data such as microseisms \cite{lukac2009recovering}, sunlight \cite{gupchup2009sundial}, and powerline electromagnetic radiation (EMR) \cite{li2017natural}. They can be a basis for clock synchronization. For instance, the secure clock synchronization approach in \cite{viswanathan2016exploiting} is based on the time fingerprints found in power grid voltage. These studies focus on the p2p synchronization. In this paper, we study the resilience bounds of network clock synchronization against p2p synchronization faults.


Different from clock synchronization that ensures the clocks to have the same value, {\em clock calibration} ensures different clocks to advance at the same speed. The approaches presented in \cite{rowe2009low,li2012flight,hao2011,li2011exploiting} exploit powerline EMR, fluorescent lamp flickering, Wi-Fi beacons, and FM Radio Data System broadcasts to calibrate clocks.
However, clock calibration does not address the resilience issues of clock synchronization. In particular, the sensing-based clock calibration is also prone to faults that can subvert the network clock synchronization.

The resilience of network clock synchronization against Byzantine clock faults has been studied \cite{dolev1984possibility,Lamport1985}. A Byzantine faulty clock gives an arbitrary clock value whenever being read. It has been proved that, to guarantee the synchronization of non-faulty clocks in the presence of $m$ faulty clocks, a total of at least $(3m+1)$ clocks are needed. Different from the Byzantine faulty clock model, we consider faulty p2p synchronization sessions between clocks. The conversion of our problem to the Byzantine clock synchronization problem by considering either node involving a faulty p2p synchronization session as a faulty clock is {\em invalid}, because this faulty clock after the conversion is not a Byzantine faulty clock, unless all p2p synchronization sessions involving this clock are faulty. As our problem does not have this assumption, the resilience bound obtained in \cite{dolev1984possibility,Lamport1985} is not applicable to our problem.

\section{Background and Problem Statement}
\label{sec:problem}

\subsection{Background and Preliminaries}
\label{subsec:background}

\begin{figure}
  \centering
  \includegraphics{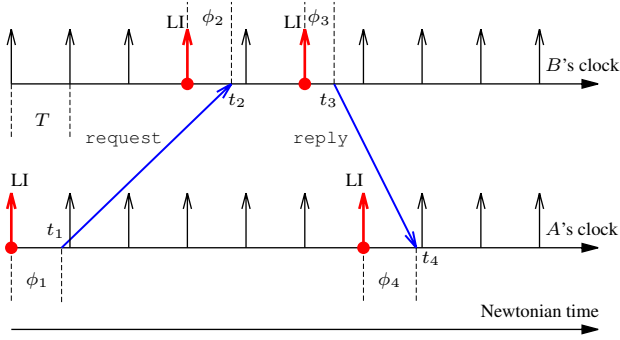}
  \caption{Principle of sensing-based p2p clock synchronization.}
  \label{fig:principle}
\end{figure}


\subsubsection{Sensing-based p2p clock synchronization}
This section describes the principle of the sensing-based p2p clock synchronization that exploits external periodic signals. Without loss of generality, we assume that the external periodic signals sensed by the two peers, nodes $A$ and $B$, are two synchronous Dirac combs with the same period $T$. Fig.~\ref{fig:principle} illustrates the two Dirac combs in the same Newtonian time frame. The objective of the sensing-based p2p clock synchronization is to estimate the offset between $A$'s and $B$'s clocks by using the Dirac combs.

To simplify the analysis of the clock offset estimation, we assume that $A$'s and $B$'s clocks advance at the same speed, such that the offset between the two clocks is a constant within a concerned time period before any clock is adjusted according to the estimated clock offset to achieve clock synchronization. In existing sensing-based p2p clock synchronization approaches \cite{viswanathan2016exploiting,rabadi2017taming,yan2017}, a {\em synchronization session}, i.e., the process of estimating the clock offset, takes a short time (e.g., tens of milliseconds in \cite{rabadi2017taming}). Typical crystal oscillators found in microcontrollers and personal computers have drift rates of 30 to 50 parts-per-million (ppm) \cite{hao2011}. Thus, the change of the clock offset during a synchronization session of 100 milliseconds is at most 5 microseconds only, whereas the clock offset estimation errors of successful synchronization sessions are at sub-millisecond \cite{viswanathan2016exploiting,rabadi2017taming} or milliseconds levels \cite{yan2017} in practice. Thus, the clock offset estimation errors caused by signal noises are much larger than those caused by the two peers' different clock speeds.


\subsubsection{Fault model}
A synchronization session is {\em successful} (or {\em non-faulty}) if it identifies the correspondence between an $A$'s Dirac impulse and a $B$'s Dirac impulse that occur at the same Newtonian time instant; otherwise, the synchronization session is {\em faulty}. Since the two Dirac combs are synchronous, a successful synchronization session gives a zero clock offset estimation error, whereas a faulty synchronization session gives a clock offset estimation error of $nT$, where $n$ is a non-zero integer.

\subsubsection{Other related issues}
It has been shown in \cite{viswanathan2016exploiting,rabadi2017taming}, if the Dirac combs are practically difficult for the attacker to tamper with or jam, the sensing-based p2p clock synchronization can address the packet delay attack, which is an open issue that cannot be solved by conventional security measures \cite{rfc7384,mizrahi2012game,ullmann2009delay}. However, due to process noises of the external signals and sensor hardware noises/faults, the sensing-based p2p synchronization can be faulty. For self-containment of this paper, Appendix~\ref{appendix:reasons} reviews the detailed reasons of the faults. In this paper, we focus on the fault tolerance of sensing-based synchronization. Built upon the secure p2p synchronization \cite{viswanathan2016exploiting,rabadi2017taming}, the clock synchronization approach presented in this paper is resilient against both the packet delay attacks and synchronization faults.

We note that, in practice, the two Dirac combs may not be perfectly synchronous. For instance, in \cite{viswanathan2016exploiting,rabadi2017taming}, the time displacement between the two Dirac combs is about 0.05\% to 0.5\% of $T$. This time displacement is the major source of the clock offset estimation error. As the time displacements are much smaller than the synchronization faults (i.e., $nT$),
we can easily classify successful and faulty synchronization sessions by comparing the clock offset estimation error with a threshold (e.g., $\frac{T}{2}$).
For simplicity of exposition, we ignore the time displacement in our analysis regarding the system resilience against faulty synchronization sessions.

\subsection{Network Clock Synchronization}

To improve the robustness of clock synchronization against p2p synchronization faults, this section proposes an approach to cross-check the p2p synchronization results among multiple nodes and correct the faults if present.


Consider a system of $N$ nodes: $\{n_0, n_1,...n_{N-1}\}$. Let $\delta_{ij}$ denote the offset between the clocks of $n_i$ and $n_j$, which is unknown and to be estimated. Specifically, $\delta_{ij} = c_i(t) - c_j(t)$, where $c_i(t)$ and $c_j(t)$ are the clock values of $n_i$ and $n_j$ at any given time instant $t$, respectively. As discussed in \sect\ref{subsec:background}, we assume that $\delta_{ij}$ is time-invariant. By designating $n_0$ as the reference node, we have $\delta_{ij} = \delta_{i0} - \delta_{j0}$. Any pair of two nodes, $n_i$ and $n_j$, will perform a synchronization session using the sensing-based p2p clock synchronization to measure $\delta_{ij}$. Denote by $n_i \leftrightarrow n_j$ the synchronization session between $n_i$ and $n_j$. Denote by $\widetilde{\delta}_{ij}$ the measured clock offset. If the synchronization session is successful, $\widetilde{\delta}_{ij} = \delta_{ij}$; if the synchronization session is faulty, $\widetilde{\delta}_{ij} = \delta_{ij} + e_{ij}$, where $e_{ij}$ is the p2p synchronization fault. Every node pair performs a p2p synchronization session. Thus, there will be a total of ${N \choose 2} = \frac{N(N-1)}{2}$ p2p synchronization sessions.

All the $\frac{N(N-1)}{2}$ clock offset measurements are transmitted to a central node, which runs a fault-tolerate network clock synchronization algorithm. Denote by $\hat{\delta}_{ij}$ and $\hat{e}_{ij}$ the estimates for $\delta_{ij}$ and $e_{ij}$, respectively. A general equation system assuming all the p2p synchronization sessions are faulty is
\begin{equation}
\left\{
\begin{array}{ll}
\hat{\delta}_{j0} + \hat{e}_{j0} = \widetilde{\delta}_{j0}, & \forall j \in [1, N-1]; \\
\hat{\delta}_{i0} - \hat{\delta}_{j0} + \hat{e}_{ij} = \widetilde{\delta}_{ij}, & \forall i, j \in [1, N-1], i > j.
\end{array}
\right.
\label{eq:error-correction}
\end{equation}
The variables to be solved are the unknowns $\{ \hat{\delta}_{j0}| \forall j \in [1, N-1] \}$ and $\{\hat{e}_{ij} | \forall i,j \in [0, N-1], i > j \}$, where $\hat{\delta}_{j0}$ is the estimated clock offset between $n_j$ and the reference node $n_0$; $\hat{e}_{ij}$ is the estimated p2p clock synchronization fault between $n_i$ and $n_j$.


If the network clock synchronization algorithm considers that a total of $k$ p2p synchronization sessions are faulty, it keeps $k$ estimated p2p synchronization faults (i.e., $\hat{e}_{ij}$) in Eq.~(\ref{eq:error-correction}) and removes other estimated p2p synchronization faults. Thus, there will be ${\frac{N(N-1)}{2} \choose k}$ possible distributions of the $k$ estimated p2p synchronization faults among a total of $\frac{N(N-1)}{2}$ p2p synchronization sessions. Algorithm~\ref{alg:error-correction} shows the pseudocode of algorithm. It starts by assuming there are no faults (i.e., $k=0$). In each iteration that increases $k$ by one, it solves Eq.~(\ref{eq:error-correction}) for all possible distributions of the $k$ estimated p2p synchronization faults.
Once a solution is found, Algorithm~\ref{alg:error-correction} returns.

\renewcommand{\algorithmicrequire}{\textbf{Given:}}
\renewcommand{\algorithmicensure}{\textbf{Output:}}
\renewcommand{\algorithmiccomment}[1]{// #1}

\begin{algorithm}[t]
\caption{Fault-tolerate network clock synchronization.}
\label{alg:error-correction}
\small
\begin{algorithmic}[1]
\REQUIRE $\{ \widetilde{\delta}_{ij} | \forall i,j \in [0, N-1], i > j \}$
\ENSURE $\{ \hat{\delta}_{ij}, \hat{e}_{ij} | \forall i,j \in [0, N-1],i > j \}$

\STATE $k=0$
\WHILE{$k \le \frac{N(N-1)}{2}$}
\FOR{each distribution of the $k$ estimated p2p synchronization faults among the $\frac{N(N-1)}{2}$ p2p synchronization sessions}
\label{line:foreach}
\IF{the corresponding Eq.~(\ref{eq:error-correction}) has a solution}
\label{line:solve}
\STATE return $\{ \hat{\delta}_{ij}, \hat{e}_{ij} | \forall i,j \in [0, N-1], i > j \}$
\ENDIF
\ENDFOR
\label{line:endforeach}
\STATE $k = k + 1$
\ENDWHILE
\end{algorithmic}
\end{algorithm}

Algorithm~\ref{alg:error-correction} requires neither the number nor the distribution of the actual p2p synchronization faults. Whether it can correct the faults and how many faults it can tolerate will be the focus of this paper. Algorithm~\ref{alg:error-correction} is executed on a central node; its fault tolerance performance, which is the focus of this paper, will provide important understanding.




\subsection{Problem Statement}

\begin{definition}[$K$-resilience]
  Let $K \in \mathbb{Z}_{\ge 0}$ denote the number of faulty p2p synchronization sessions among a total of $\frac{N(N-1)}{2}$ sessions in an $N$-node system. The system with Algorithm ~\ref{alg:error-correction} is $K$-resilient if the algorithm can correct any $K$ non-zero p2p synchronization faults.\qed
  \label{def:resilience}
\end{definition}

From Algorithm~\ref{alg:error-correction}, we define the $K$-resilience condition that can be used to check whether a system is $K$-resilient.

\begin{definition}[$K$-resilience condition]
A system with Algorithm~\ref{alg:error-correction} is $K$-resilient if the following conditions are satisfied:
\begin{enumerate}
\item $\forall k \in [0,K)$, Eq.~(\ref{eq:error-correction}) constructed with any distribution of the $K$ actual p2p synchronization faults and any distribution of the $k$ estimated p2p synchronization faults has no solutions;
\item When $k=K$, for any distribution of the $K$ actual p2p synchronization faults and any distribution of the $k$ estimated p2p synchronization faults,
  \begin{enumerate}
  \item if the distribution of the $k$ estimated p2p synchronization faults is identical to the distribution of the actual faults, Eq.~(\ref{eq:error-correction}) has a unique solution;
  \item otherwise, Eq.~(\ref{eq:error-correction}) has no solutions. \qed
  \end{enumerate}
\end{enumerate}
\label{def:resilience-condition}
\end{definition}

Note that in the condition 2)-a) of Definition~\ref{def:resilience-condition}, the unique solution must give the correct estimates of the clock offsets and the p2p synchronization faults.

We aim at analyzing the following resilience bounds:

\begin{definition}[Lower bound of maximum resilience]
  A function $f_l(N)$ is a lower bound of maximum resilience if any $N$-node system with Algorithm~\ref{alg:error-correction} is $K$-resilient for $K \le f_l(N)$.
\end{definition}

\begin{definition}[Upper bound of maximum resilience]
  A function $f_u(N)$ is a upper bound of maximum resilience if any $N$-node system with Algorithm~\ref{alg:error-correction} is not $K$-resilient for $K > f_u(N)$.
\end{definition}

\begin{definition}[Tight bound of maximum resilience]
  A function $f_t(N)$ is a tight bound of maximum resilience if any $N$-node system with Algorithm~\ref{alg:error-correction} is $K$-resilient for $K \le f_t(N)$ and not $K$-resilient for $K > f_t(N)$.
\end{definition}



\section{Vectorization and $K$-Resilience}
\label{sec:analysis}

\subsection{Vectorization}

We vectorize the representation of Eq.~(\ref{eq:error-correction}) that is solved by Line~\ref{line:solve} of Algorithm~\ref{alg:error-correction}. Define $\hat{\vecg{\delta}} \in \mathbb{R}^{N-1}$ composed of all clock offset estimates, i.e., $\hat{\vecg{\delta}} = \left(\hat{\delta}_{10}, \hat{\delta}_{20}, \ldots, \hat{\delta}_{(N-1)0}\right)^\intercal$. Define $\hat{\vec{e}} \in \mathbb{R}^k$ composed of the $k$ p2p synchronization fault estimates. Eq.~(\ref{eq:error-correction}) can be rewritten as $
  \left( \mat{A}_1 \mat{A}_2 \right)
  \left(
  \begin{array}{c}
    \hat{\vecg{\delta}} \\
    \hat{\vec{e}}
  \end{array}
\right) = \vec{b}
$,
where $\mat{A}_1 \in \mathbb{R}^{\frac{N(N-1)}{2} \times (N-1)}$ and $\mat{A}_2 \in \mathbb{R}^{\frac{N(N-1)}{2} \times k}$ are two matrices composed of -1, 0, and 1 containing coefficients corresponding to $\hat{\delta}_{\cdot 0}$ and $\hat{e}_{\cdot\cdot}$, respectively;
the vector $\vec{b} \in \mathbb{R}^{\frac{N(N-1)}{2}}$ consists of all the measured clock offsets. To simplify notation, we define $\mat{A} = \left( \mat{A}_1 \mat{A}_2 \right)$ and $\vec{x} = \left( \begin{array}{c}
    \hat{\vecg{\delta}} \\
    \hat{\vec{e}}
  \end{array} \right)$. From the Rouch\'{e}-Capelli theorem \cite{Shafarevich2012}, the necessary and sufficient condition that $\mat{A}\vec{x} = \vec{b}$ has no solutions is $\mathrm{rank}(\mat{A} | \vec{b}) \neq \mathrm{rank}(\mat{A})$, where $\mat{A} | \vec{b}$ is the augmented matrix.

\subsection{$K$-Resilience under Certain Settings}
\label{subsec:manual-check}

This section presents the analysis on the $K$-resilience of an $N$-node system with Algorithm~\ref{alg:error-correction} under certain settings of $K$ and $N$. This analysis provides insights into the more general analysis of the lower/upper bounds of maximum resilience.

\begin{proposition}
  A 3-node system is not 1-resilient.
  \label{exmp:1}
\end{proposition}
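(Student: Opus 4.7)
The plan is to prove non-resilience directly from Definition~\ref{def:resilience-condition} by exhibiting a spurious solution that violates condition 2)-b) for $K=1$. Concretely, I would pick any single actual fault and show that, when Algorithm~\ref{alg:error-correction} reaches $k=1$ and guesses the fault in a p2p session \emph{other} than the true one, Eq.~(\ref{eq:error-correction}) still admits a (wrong) solution. Since Algorithm~\ref{alg:error-correction} returns as soon as any solution is found, this wrong guess could be returned first; at minimum, Definition~\ref{def:resilience-condition} is violated.

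First I would write out Eq.~(\ref{eq:error-correction}) explicitly for $N=3$: the three p2p sessions $n_1\!\leftrightarrow\!n_0$, $n_2\!\leftrightarrow\!n_0$, $n_2\!\leftrightarrow\!n_1$ give three equations in the two unknowns $\hat{\delta}_{10},\hat{\delta}_{20}$, with the coefficient block
\[
\mat{A}_1 = \begin{pmatrix} 1 & 0 \\ 0 & 1 \\ -1 & 1 \end{pmatrix}.
\]
For $k=1$, $\mat{A}_2$ is a single standard basis column $\vec{e}_r\in\mathbb{R}^3$ indicating which of the three sessions is guessed as faulty. I would then check, by a direct $3\times 3$ determinant computation, that $\mat{A}=(\mat{A}_1\;\vec{e}_r)$ is nonsingular for each of the three choices $r\in\{1,2,3\}$.

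Given full rank of $\mat{A}$, the Rouch\'e-Capelli theorem (quoted in the Vectorization subsection) implies that $\mat{A}\vec{x}=\vec{b}$ has a unique solution for \emph{every} right-hand side $\vec{b}$. I would then fix the actual fault in a specific session, say $n_1\!\leftrightarrow\!n_0$ with nonzero $e_{10}$, construct the resulting $\vec{b}$, and observe that the wrong guess $r=2$ (fault placed in $n_2\!\leftrightarrow\!n_0$) still yields a solution, with $\hat{\delta}_{10}=\delta_{10}+e_{10}$, $\hat{\delta}_{20}=\delta_{20}+e_{10}$, $\hat{e}_{20}=-e_{10}$. This directly contradicts condition 2)-b) of Definition~\ref{def:resilience-condition}, establishing the proposition.

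There is no real technical obstacle here; the only thing to be careful about is making the argument line up with the chosen formal route. Using Definition~\ref{def:resilience-condition} via rank/nonsingularity of $\mat{A}$ is cleaner than reasoning about which solution Algorithm~\ref{alg:error-correction} happens to enumerate first, because it avoids any dependence on the iteration order inside the \textbf{for} loop. The three-column enumeration is tiny, so the proof is essentially a one-line rank check plus the Rouch\'e-Capelli citation, and it also previews the style of argument that will be needed for larger $N$.
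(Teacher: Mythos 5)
Your proposal is correct and follows essentially the same route as the paper's proof: show that at $k=1$ a wrongly placed estimated fault still yields a full-column-rank system and hence a unique wrong solution, violating condition 2)-b) of Definition~\ref{def:resilience-condition} (the paper uses the actual fault in $n_1\leftrightarrow n_2$ and the wrong guess $n_0\leftrightarrow n_1$, but by symmetry your choice is equivalent, and your explicit spurious solution checks out). The only differences are cosmetic: you verify nonsingularity for all three guess positions rather than exhibiting one, and you omit the $k=0$ step, neither of which changes the argument.
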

\begin{proof}
  Consider a case where the p2p synchronization session $n_1 \leftrightarrow n_2$ is faulty. When $k=0$ in Algorithm~\ref{alg:error-correction}, the vectorized equation system in Eq.~(\ref{eq:error-correction}) is
  \begin{equation*}
    \begin{array}{cccc}
      \left(
        \begin{array}{cc}
          1 & 0 \\
          0 & 1 \\
          -1 & 1
        \end{array}
      \right)
      &
      \left(
        \begin{array}{c}
          \hat{\delta}_{10} \\
          \hat{\delta}_{20}
        \end{array}
      \right)
      &
      =
      &
      \left(
        \begin{array}{c}
          \delta_{10} \\
          \delta_{20} \\
          \delta_{20} - \delta_{10} + e_{21}
        \end{array}
      \right).
      \\
      \Uparrow & \Uparrow & & \Uparrow \\
      \mat{A} & \vec{x} &  & \vec{b}
    \end{array}
  \end{equation*}
  Note that $\mat{A}_2$ and $\hat{\vec{e}}$ are empty. With $e_{21} \neq 0$, Gaussian elimination shows that $\mathrm{rank}(\mat{A} | \vec{b}) \neq \mathrm{rank}(\mat{A})$. Thus, the equation system has no solutions and Algorithm~\ref{alg:error-correction} will move on to the case of $k=1$. The algorithm will attempt to test all the $\frac{N(N-1)}{2}=3$ possible cases of a single faulty p2p synchronization session. For instance, when the algorithm assumes that $n_0 \leftrightarrow n_1$ is faulty, the equation system is
    \begin{equation*}
    \left(
      \begin{array}{ccc}
        1 & 0 & 1\\
        0 & 1 & 0 \\
        -1 & 1 & 0
      \end{array}
    \right)
    \left(
      \begin{array}{c}
        \hat{\delta}_{10} \\
        \hat{\delta}_{20} \\
        \hat{e}_{10}
      \end{array}
    \right)
    =
    \left(
      \begin{array}{c}
        \delta_{10} \\
        \delta_{20} \\
        \delta_{20} - \delta_{10} + e_{21}
      \end{array}
    \right).
  \end{equation*}
  With $e_{21} \neq 0$, we have $\mathrm{rank}(\mat{A}|\vec{b}) = \mathrm{rank}(\mat{A})$ and $\mat{A}$ has full column rank. Thus, the equation system has a unique solution. Therefore, the condition 2)-b) of Definition~\ref{def:resilience-condition} is not satisfied and the 3-node system is not 1-resilient. In fact, the unique solution must be a wrong solution, which is $\{ \hat{\delta}_{10} = \delta_{10} - e_{21}, \hat{\delta}_{20} = \delta_{20}, \hat{e}_{10} = e_{21} \}$.
\end{proof}

\begin{proposition}
  A 4-node system is 1-resilient.
  \label{exmp:2}
\end{proposition}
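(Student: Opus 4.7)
The plan is to verify directly the three cases of the $K$-resilience condition (Definition~\ref{def:resilience-condition}) with $N=4$ and $K=1$. Because there are $\binom{4}{2}=6$ p2p sessions but only $N-1=3$ unknown offsets, $\mat{A}_1$ is a $6\times 3$ matrix whose column space is cut out by the triangle identities of the complete graph $K_4$: for any three nodes $i,j,k$, any consistent measurement vector must satisfy $\widetilde{\delta}_{ij}+\widetilde{\delta}_{jk}+\widetilde{\delta}_{ki}=0$ (with the appropriate signs). This cycle-space characterization will be the main tool throughout.

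First I would handle condition~1) ($k=0$). Let $(i^{\star},j^{\star})$ be the edge carrying the single non-zero fault $e_{i^{\star}j^{\star}}$. Every edge of $K_4$ lies in $N-2=2$ triangles, so such a triangle exists. Substituting the measured values into its cycle identity yields $\pm e_{i^{\star}j^{\star}}\ne 0$ on the right-hand side, so $\vec{b}\notin\mathrm{Col}(\mat{A}_1)$ and by the Rouch\'{e}--Capelli theorem $\mathrm{rank}(\mat{A}\mid\vec{b})>\mathrm{rank}(\mat{A})$. Hence Eq.~(\ref{eq:error-correction}) has no solution at $k=0$.

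For condition~2) with $k=1$, the algorithm picks an edge $(i',j')$ as the estimated location of the fault. In case~2)-a), $(i',j')=(i^{\star},j^{\star})$, and the true values $(\vecg{\delta},e_{i^{\star}j^{\star}})$ already satisfy Eq.~(\ref{eq:error-correction}), so I only need uniqueness, i.e.\ full column rank of $\mat{A}=(\mat{A}_1\mid\mat{A}_2)$. The single column of $\mat{A}_2$ is the standard basis vector $\vec{u}_{i^{\star}j^{\star}}$, and the same cycle-identity argument shows it cannot lie in $\mathrm{Col}(\mat{A}_1)$ (any triangle through $(i^{\star},j^{\star})$ would force a second non-zero entry), giving full column rank and hence the unique (and correct) solution. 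In case~2)-b), $(i',j')\ne(i^{\star},j^{\star})$, and consistency forces $e_{i^{\star}j^{\star}}\vec{u}_{i^{\star}j^{\star}}-\hat{e}\,\vec{u}_{i'j'}$ to lie in $\mathrm{Col}(\mat{A}_1)$ for some $\hat{e}$. The key combinatorial observation is that in $K_4$ there is always a triangle containing $(i^{\star},j^{\star})$ but not $(i',j')$: if the two edges share a vertex they lie together in exactly one triangle, so the other triangle through $(i^{\star},j^{\star})$ excludes $(i',j')$; if they share no vertex, both triangles through $(i^{\star},j^{\star})$ exclude $(i',j')$. On such a separating triangle the cycle identity reduces to $\pm e_{i^{\star}j^{\star}}=0$, contradicting $e_{i^{\star}j^{\star}}\ne 0$.

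The main obstacle I expect is case~2)-b), where one has to argue that a separating triangle exists for \emph{every} mismatched estimated edge. For $K_4$ this is a short combinatorial count (each edge lies in exactly $2$ triangles, and two distinct edges share at most one triangle), but the same separating-triangle idea will have to be generalized---presumably to separating subgraphs or cycle-space arguments over larger cliques---when extending the analysis to the larger values of $N$ appearing in Table~\ref{tab:lower-bound}, so I would phrase the proof with that generalization in mind.
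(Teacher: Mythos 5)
Your proof is correct, but it takes a genuinely different route from the paper's. The paper only sketches the argument: it fixes a representative faulty session ($n_0 \leftrightarrow n_2$), writes out the vectorized system for each candidate location of the estimated fault, and verifies the conditions of Definition~\ref{def:resilience-condition} by explicit rank computations ($\mathrm{rank}(\mat{A}\mid\vec{b})$ versus $\mathrm{rank}(\mat{A})$, plus full column rank of $\mat{A}$), appealing to ``an exhaustive check'' for the remaining cases. You replace the case-by-case Gaussian elimination with a structural characterization: a measurement vector is consistent only if it satisfies the triangle (cycle-space) identities of $K_4$, and all three conditions of the $K$-resilience definition then reduce to the existence of a triangle containing the truly faulty edge but avoiding the estimated one. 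This yields a uniform, human-checkable proof that covers all $6\times 6$ combinations of actual and assumed fault locations at once, and it isolates the combinatorial fact that actually drives 1-resilience (every edge of $K_4$ lies in exactly two triangles, and two distinct edges share at most one). What the paper's rank formulation buys instead is mechanizability: it is precisely the condition that Algorithm~\ref{alg:lower-bound} and Proposition~\ref{prop:sufficient} later test by computer for $N$ up to 12, whereas your separating-triangle idea would need the nontrivial generalization you allude to (separating cycles or subgraphs) before it could handle larger $N$ or $K\ge 2$. One minor remark: you assert that the column space of $\mat{A}_1$ is \emph{cut out} by the triangle identities, but only the easy inclusion (every vector in the column space satisfies them) is used anywhere in your argument, so nothing hinges on the converse.
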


We provide a sketch of the proof as follows instead of the complete proof due to space limit. Consider a case where the p2p synchronization session $n_0 \leftrightarrow n_2$ is faulty. When $k=0$ in Algorithm~\ref{alg:error-correction},
similar to Proposition~\ref{exmp:1}, the equation system has no solutions and Algorithm~\ref{alg:error-correction} will move on to the case of $k=1$. The algorithm will test all the $\frac{N(N-1)}{2}=6$ possible cases of a single faulty p2p synchronization session. For instance, when the algorithm assumes $n_0 \leftrightarrow n_1$ is faulty, the vectorized equation system is
\begin{equation}
\left(
\begin{array}{cccc}
1 & 0 & 0 & 1 \\
0 & 1 & 0 & 0 \\
0 & 0 & 1 & 0 \\
-1 & 1 & 0 & 0 \\
0 & -1 & 1 & 0 \\
-1 & 0 & 1 & 0
\end{array}
\right)
\left(
\begin{array}{c}
\hat{\delta}_{10}\\
\hat{\delta}_{20}\\
\hat{\delta}_{30}\\
\hat{e}_{10}
\end{array}
\right)
=
\left(
\begin{array}{c}
\delta_{10} \\
\delta_{20} + e_{20} \\
\delta_{30} \\
\delta_{20} - \delta_{10} \\
\delta_{30} - \delta_{20} \\
\delta_{30} - \delta_{10}
\end{array}
\right).
\label{eq:4node-system0}
\end{equation}
As $\mathrm{rank}(\mat{A} | \vec{b}) \neq \mathrm{rank}(\mat{A})$, the equation system has no solutions. An exhaustive check shows that, only when the algorithm assumes the synchronization session between $n_0$ and $n_2$ is faulty, the equation system has a unique solution (i.e., $\mathrm{rank}(\mat{A} | \vec{b}) = \mathrm{rank}(\mat{A})$ and $\mat{A}$ has full column rank). Thus, the algorithm can correct the fault. In fact, it can be verified that, for the 4-node system, no matter which p2p synchronization session is faulty, the algorithm can correct the fault. Therefore, the 4-node system is 1-resilient.

\begin{proposition}
  A 4-node system is not 2-resilient.
  \label{exmp:3}
\end{proposition}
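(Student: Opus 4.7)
The plan is to construct an explicit 2-fault configuration for which Algorithm~\ref{alg:error-correction} already finds a valid $k=1$ single-fault model on a wrong edge, thereby violating clause 1) of Definition~\ref{def:resilience-condition} for $K=2$ and forcing the algorithm to terminate before reaching the correct $k=2$ hypothesis.

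I would take the two actually faulty sessions to be $n_0 \leftrightarrow n_1$ and $n_0 \leftrightarrow n_2$, with matching nonzero errors $e_{10} = e_{20} = c$. The six measurements then become $\widetilde{\delta}_{10} = \delta_{10} + c$, $\widetilde{\delta}_{20} = \delta_{20} + c$, $\widetilde{\delta}_{30} = \delta_{30}$, $\widetilde{\delta}_{21} = \delta_{20} - \delta_{10}$, $\widetilde{\delta}_{31} = \delta_{30} - \delta_{10}$, and $\widetilde{\delta}_{32} = \delta_{30} - \delta_{20}$, the latter three being fault-free by construction.

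Next, I would dispatch $k=0$ quickly: the triangle on $\{n_0, n_1, n_3\}$ forces $\widetilde{\delta}_{31} = \widetilde{\delta}_{30} - \widetilde{\delta}_{10}$, which fails because $c \neq 0$, so Eq.~(\ref{eq:error-correction}) has no solution and Algorithm~\ref{alg:error-correction} advances to $k=1$. Then, among the six single-fault hypotheses at $k=1$, I would focus on the one that assumes $n_0 \leftrightarrow n_3$ is faulty, and verify by substitution that the assignment $\hat{\delta}_{10} = \delta_{10} + c$, $\hat{\delta}_{20} = \delta_{20} + c$, $\hat{\delta}_{30} = \delta_{30} + c$, $\hat{e}_{30} = -c$ solves the corresponding system: the three tree equations hold by construction, and the three triangle equations hold because a common shift of $c$ in all three $\hat{\delta}_{\cdot 0}$ cancels inside every pairwise difference. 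Consequently Eq.~(\ref{eq:error-correction}) is consistent at $k=1$ with a distribution of the estimated fault different from the actual one, so clause 1) of the $K=2$-resilience condition fails; in fact the returned solution is wrong, since the true scenario has two faults on $n_0 \leftrightarrow n_1$ and $n_0 \leftrightarrow n_2$, not a single fault on $n_0 \leftrightarrow n_3$.

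The main conceptual obstacle is identifying the counterexample in the first place. The guiding idea is that the measurements depend on the edge-error vector only modulo coboundaries of the form $\epsilon_i - \epsilon_j$, so shifting every error incident to a given node by the same constant is undetectable. In $K_4$, two equal faults placed on two of the three edges incident to $n_0$ are therefore indistinguishable from a single fault of the opposite sign on the remaining third edge incident to $n_0$, which is precisely the structural ambiguity that Algorithm~\ref{alg:error-correction} cannot resolve and that makes the 4-node system not 2-resilient.
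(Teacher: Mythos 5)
Your proposal is correct and follows essentially the same route as the paper's own proof: the same counterexample with equal faults $e_{10}=e_{20}$ on $n_0 \leftrightarrow n_1$ and $n_0 \leftrightarrow n_2$, the same wrong $k=1$ solution under the hypothesis that $n_0 \leftrightarrow n_3$ is faulty, and the same conclusion that the resilience condition is violated. The explicit substitution check and the coboundary intuition are welcome additions, but the argument is the paper's argument.
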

\begin{proof}
  Consider the 4-node system with two faulty p2p synchronization sessions: $n_0 \leftrightarrow n_1$ and $n_0 \leftrightarrow n_2$. When $k=0$, the equation system has no solutions. When $k=1$, consider a case where $n_0 \leftrightarrow n_3$ is assumed to be faulty by the algorithm. The vectorized equation system is
  \begin{equation}
    \left(
      \begin{array}{cccc}
        1 & 0 & 0 & 0 \\
        0 & 1 & 0 & 0 \\
        0 & 0 & 1 & 1 \\
        -1 & 1 & 0 & 0 \\
        -1 & 0 & 1 & 0 \\
        0 & -1 & 1 & 0
      \end{array}
    \right)
    \left(
      \begin{array}{c}
        \hat{\delta}_{10} \\
        \hat{\delta}_{20} \\
        \hat{\delta}_{30} \\
        \hat{e}_{30}
      \end{array}      
    \right)
    =
    \left(
      \begin{array}{c}
        \delta_{10} + e_{10} \\
        \delta_{20} + e_{20} \\
        \delta_{30} \\
        \delta_{20} - \delta_{10} \\
        \delta_{30} - \delta_{10} \\
        \delta_{30} - \delta_{20}
      \end{array}
    \right).
    \label{eq:4node-system1}
  \end{equation}
  If $e_{10} \neq e_{20}$, $\mathrm{rank}(\mat{A}|\vec{b}) \neq \mathrm{rank}(\mat{A})$ and the equation system has no solutions. However, if $e_{10} = e_{20}$, $\mathrm{rank}(\mat{A} | \vec{b}) = \mathrm{rank}(\mat{A})$ and $\mat{A}$ has full column rank; the equation system has a unique wrong solution of $\{\hat{\delta}_{10} = \delta_{10} + e_{10}, \hat{\delta}_{20} = \delta_{20} + e_{10},  \hat{\delta}_{30} = \delta_{30} + e_{10}, \hat{e}_{30} = - e_{10}\}$. Although this counterexample against the 4-node system's 2-resilience is obtained under a certain condition of $e_{10} = e_{20}$, we can conclude that the 4-node system is not 2-resilient.
  \end{proof}

  To gain more insights, we also analyze a case of $k=2$ with $n_0 \leftrightarrow n_1$ and $n_0 \leftrightarrow n_3$ assumed to be faulty by the algorithm. The vectorized equation system is
  \begin{equation}
    \small
      \left(
      \begin{array}{ccccc}
        1 & 0 & 0 & 1 & 0 \\
        0 & 1 & 0 & 0 & 0 \\
        0 & 0 & 1 & 0 & 1 \\
        -1 & 1 & 0 & 0 & 0 \\
        -1 & 0 & 1 & 0 & 0 \\
        0 & -1 & 1 & 0 & 0
      \end{array}
    \right)
    \left(
      \begin{array}{c}
        \hat{\delta}_{10} \\
        \hat{\delta}_{20} \\
        \hat{\delta}_{30} \\
        \hat{e}_{10} \\
        \hat{e}_{30}
      \end{array}      
    \right)
    \!\!=\!\!
    \left(
      \begin{array}{c}
        \delta_{10} + e_{10} \\
        \delta_{20} + e_{20} \\
        \delta_{30} \\
        \delta_{20} - \delta_{10} \\
        \delta_{30} - \delta_{10} \\
        \delta_{30} - \delta_{20}
      \end{array}
    \right).
    \label{eq:4node-system2}
\end{equation}
As $\mathrm{rank}(\mat{A} | \vec{b}) = \mathrm{rank}(\mat{A})$ and $\mat{A}$ has full column rank, the equation system has a unique solution, which violates the 2-resilience condition. In fact, the equation system has a unique wrong solution that does not require any relationship between $e_{10}$ and $e_{20}$: $\{\hat{\delta}_{10} = \delta_{10} + e_{20}, \hat{\delta}_{20} = \delta_{20} + e_{20}, \hat{\delta}_{30} = \delta_{30} + e_{20}, \hat{e}_{10} = e_{10} - e_{20}, \hat{e}_{30} = -e_{20}\}$.

\begin{proposition}
  A 5-node system is 1-resilient.
  \label{exmp:4}
\end{proposition}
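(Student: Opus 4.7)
The plan is to verify all three clauses of Definition~\ref{def:resilience-condition} for $K=1$ on the complete synchronization graph $K_5$, exploiting its edge-transitivity. By relabeling nodes we may assume without loss of generality that the unique actual faulty session is $n_0 \leftrightarrow n_1$ with error $e_{10} \neq 0$. Throughout, each subcase reduces to a Rouch\'{e}--Capelli comparison $\mathrm{rank}(\mat{A}|\vec{b})$ versus $\mathrm{rank}(\mat{A})$ of the same flavor as in Propositions~\ref{exmp:1}--\ref{exmp:3}.

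For the $k=0$ clause, I would form the $10 \times 4$ system $\mat{A}_1 \hat{\vecg{\delta}} = \vec{b}$ and exhibit the triangle witness: the linear combination (row for $n_0 \leftrightarrow n_2$) $-$ (row for $n_0 \leftrightarrow n_1$) $-$ (row for $n_1 \leftrightarrow n_2$) zeros out $\mat{A}_1$ but evaluates to $-e_{10}$ on the right-hand side, so $\mathrm{rank}(\mat{A}|\vec{b}) > \mathrm{rank}(\mat{A})$ and no solution exists.

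For the $k=1$ clause with the correct guess $e_A = n_0 \leftrightarrow n_1$, appending the extra column of $\mat{A}_2$ effectively turns the one contradicting row into an unconstrained equation for $\hat{e}_{10}$. The remaining nine rows are the correct constraints on the graph $K_5$ minus one edge, which is still connected on five vertices; hence they have column rank $4$ in $\mat{A}_1$ and pin $\hat{\vecg{\delta}}$ down to the true $\vecg{\delta}$. The added column then fixes $\hat{e}_{10} = e_{10}$, yielding the unique (correct) solution demanded by Definition~\ref{def:resilience-condition} 2)-a).

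The main obstacle is the $k=1$ clause with a wrong guess $e_A \neq e_F$, which must be ruled out for every remaining candidate. I would reduce the nine candidates to two representatives by the action of the stabilizer of $e_F$ inside $\mathrm{Aut}(K_5) = S_5$: (i) $e_A = n_0 \leftrightarrow n_2$, meeting $e_F$; and (ii) $e_A = n_2 \leftrightarrow n_3$, disjoint from $e_F$. In each representative, the eight rows of $\mat{A}_1$ indexed by $K_5 \setminus \{e_A, e_F\}$ form a subgraph with minimum degree $\ge 2$ on five vertices, hence connected, so those eight correct rows alone uniquely pin $\hat{\vecg{\delta}} = \vecg{\delta}$. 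Substituting this $\hat{\vecg{\delta}}$ into the $e_F$-row forces $e_{10} = 0$, contradicting $e_{10} \neq 0$; since the lone extra column of $\mat{A}_2$ is nonzero only in the $e_A$-row, it cannot absorb the contradiction sitting in the $e_F$-row, so $\mathrm{rank}(\mat{A}|\vec{b}) > \mathrm{rank}(\mat{A})$. Combining the three clauses then establishes the 1-resilience of the 5-node system.
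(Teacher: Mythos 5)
Your proof is correct, but it is substantially more explicit than what the paper provides: the paper's own argument for this proposition is only a sketch that fixes the faulty session to be $n_0 \leftrightarrow n_1$ by relabeling and then defers to an unexhibited ``exhaustive check over all the ${5 \choose 2}$ possible cases.'' You perform the same reduction but replace the brute-force check with structural arguments: the triangle witness for the $k=0$ inconsistency, the use of the edge stabilizer in $S_5$ to cut the nine wrong guesses down to two representatives (adjacent and disjoint), and a connectivity argument showing that the eight correct rows of $K_5$ minus two edges already pin down $\hat{\vecg{\delta}}$, so the untouched $e_F$-row delivers a contradiction that the single $\mat{A}_2$ column cannot absorb. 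One small point worth making explicit: minimum degree $\ge 2$ does not imply connectivity in general, but it does on five vertices (two components would each need at least three vertices), and alternatively $K_5$ is $4$-edge-connected so removing two edges cannot disconnect it; either justification closes that step. What your route buys is a human-checkable proof whose key mechanism --- the correct constraints restricted to the non-suspect edges form a connected spanning subgraph, forcing the true offsets and exposing the unexplained fault --- generalizes naturally and foreshadows why resilience grows with $N$, whereas the paper's exhaustive-check style is what its Algorithm~2 later mechanizes for larger $N$.
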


We provide a sketch of the proof as follows instead of the complete proof due to space limit. Consider a 5-node system with one p2p synchronization fault. The resilience is independent from how we name the nodes. We name the two involving nodes of the faulty synchronization session to be $n_0$ and $n_1$. An exhaustive check over all the $5 \choose 2$ possible cases for a single assumed faulty synchronization session shows that the 1-resilience condition is satisfied. Thus, the 5-node system is 1-resilient.

\begin{proposition}
  A 5-node system is not 2-resilient.
  \label{exmp:5}
\end{proposition}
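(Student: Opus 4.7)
The plan is to adapt the template of Proposition~\ref{exmp:3} by exhibiting a single counterexample, this time under the mild restriction that the two actual faults happen to take the same value. Specifically, I would place the two actual faulty sessions on $n_0 \leftrightarrow n_1$ and $n_0 \leftrightarrow n_2$ and set $e_{10} = e_{20} = e$ for some nonzero $e$. As in Proposition~\ref{exmp:3}, a single scenario is enough to refute 2-resilience even if that scenario requires a specific relationship among the fault values.

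The counterexample lives at the iteration $k = 2$ of Algorithm~\ref{alg:error-correction}. I would test the assumed-fault distribution $F' = \{n_0 \leftrightarrow n_3,\; n_0 \leftrightarrow n_4\}$, which is disjoint from the actual fault set $F = \{n_0 \leftrightarrow n_1,\; n_0 \leftrightarrow n_2\}$. Writing out Eq.~(\ref{eq:error-correction}) for this $F'$ gives a $10 \times 6$ linear system in the unknowns $\hat{\delta}_{10}, \hat{\delta}_{20}, \hat{\delta}_{30}, \hat{\delta}_{40}, \hat{e}_{30}, \hat{e}_{40}$. The underlying idea is that uniformly shifting all four clock offsets by $e$, namely $\hat{\delta}_{j0} = \delta_{j0} + e$ for $j = 1,2,3,4$, together with $\hat{e}_{30} = \hat{e}_{40} = -e$, satisfies every measurement: the six triangles internal to $\{n_1,n_2,n_3,n_4\}$ are shift-invariant; the two faulty direct measurements on $(0,1)$ and $(0,2)$ absorb the shift precisely because $e_{10} = e_{20} = e$; and the shift on the two edges $(0,3)$ and $(0,4)$ is absorbed by the slots $\hat{e}_{30}$ and $\hat{e}_{40}$ reserved by $F'$. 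A short rank check then confirms $\mathrm{rank}(\mat{A}|\vec{b}) = \mathrm{rank}(\mat{A}) = 6$, so the wrong solution is unique, which violates condition 2)-b) of Definition~\ref{def:resilience-condition} and yields the claim.

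For completeness I would also verify that Algorithm~\ref{alg:error-correction} actually reaches $k = 2$ on this input, i.e., that $k = 0$ and $k = 1$ admit no solution. The $k = 0$ case collapses immediately via any triangle that bridges $\{n_1, n_2\}$ and $\{n_3, n_4\}$, for instance $\widetilde{\delta}_{10}, \widetilde{\delta}_{30}, \widetilde{\delta}_{31}$ produce inconsistent values for $\hat{\delta}_{10}$. The $k = 1$ case requires enumerating the $\binom{5}{2} = 10$ single-edge assumed faults; in every case the lone free parameter $\hat{e}$ can absorb the discrepancy in at most one of the four triangles that bridge $\{n_1, n_2\}$ to $\{n_3, n_4\}$, leaving another inconsistent. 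This enumeration is the main potential obstacle, but each subcase reduces to a one-line triangle contradiction after the non-faulty direct measurements pin down three of the four $\hat{\delta}_{j0}$'s, so the argument is routine rather than delicate.
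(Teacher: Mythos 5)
Your proof is correct and follows essentially the same strategy as the paper's: exhibit one configuration of two actual faults satisfying a linear constraint on their values, together with a disjoint assumed-fault distribution at $k=2$ for which Eq.~(\ref{eq:error-correction}) has a unique wrong solution, thereby violating condition 2)-b) of Definition~\ref{def:resilience-condition}. The paper merely picks a different counterexample (actual faults on $n_0 \leftrightarrow n_1$ and $n_1 \leftrightarrow n_4$ with $e_{10} = -e_{41}$, assumed faults on $n_1 \leftrightarrow n_2$ and $n_1 \leftrightarrow n_3$), and your additional verification that the algorithm reaches $k=2$ is a harmless extra.
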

\begin{proof}
  We consider a 5-node system, in which (i) the p2p synchronization sessions $n_0 \leftrightarrow n_1$ and $n_1 \leftrightarrow n_4$ are faulty and (ii) the p2p synchronization sessions $n_1 \leftrightarrow n_2$ and $n_1 \leftrightarrow n_3$ are assumed by the algorithm to be faulty. The vectorized equation system is
  \begin{equation}
    \scriptsize
    \left(
      \begin{array}{cccccc}
        1 & 0 & 0 & 0 & 0 & 0 \\
        0 & 1 & 0 & 0 & 0 & 0 \\
        0 & 0 & 1 & 0 & 0 & 0 \\
        0 & 0 & 0 & 1 & 0 & 0 \\
        -\!1 & 1 & 0 & 0 & 1 & 0 \\
        -\!1 & 0 & 1 & 0 & 0 & 1 \\
        -\!1 & 0 & 0 & 1 & 0 & 0 \\
        0 & -\!1 & 1 & 0 & 0 & 0 \\
        0 & -\!1 & 0 & 1 & 0 & 0 \\
        0 & 0 & -\!1 & 1 & 0 & 0        
      \end{array}
    \right)
    \!\!
    \left(
      \begin{array}{c}
        \hat{\delta}_{10} \\
        \hat{\delta}_{20} \\
        \hat{\delta}_{30} \\
        \hat{\delta}_{40} \\
        \hat{e}_{21} \\
        \hat{e}_{31}
      \end{array}
    \right)
    \!\!=\!\!
    \left(
      \begin{array}{c}
        \delta_{10} + e_{10} \\
        \delta_{20} \\
        \delta_{30} \\
        \delta_{40} \\
        \delta_{20} - \delta_{10} \\
        \delta_{30} - \delta_{10} \\
        \delta_{40} \!-\! \delta_{10} \!+\! e_{41} \\
        \delta_{30} - \delta_{20} \\
        \delta_{40} - \delta_{20} \\
        \delta_{40} - \delta_{30}
      \end{array}
    \right)\!\!.
    \label{eq:5node-system}
  \end{equation}
If $e_{10} = -e_{41}$, the equation system has a unique solution of $\{\hat{\delta}_{10} = \delta_{10} + e_{10}, \hat{\delta}_{20} = \delta_{20}, \hat{\delta}_{30} = \delta_{30}, \hat{\delta}_{40} = \delta_{40}, \hat{e}_{21} = e_{10}, \hat{e}_{31} = e_{10} \}$, which violates the resilience condition. Thus, a 5-node system is not 2-resilient.
\end{proof}

\subsection{Re-Vectorization}

In \sect\ref{subsec:manual-check}, we adopt an approach of enumerating counterexamples to prove that a system is not $K$-resilient. As shown in the proofs of Propositions~\ref{exmp:3} and \ref{exmp:5}, if the actual faults satisfy certain conditions, the rank of $\mat{A}|\vec{b}$ may change, presenting a pitfall to the approach of enumerating counterexamples. This motivates us to consider the actual faults as the variables of the equation system in Eq.~(\ref{eq:error-correction}). The following re-vectorization will be used in \sect\ref{subsec:symbolic} to derive the lower bound of maximum resilience.

By defining a vector $\vec{e} \in \mathbb{R}^{K}$ composed of the $K$ actual p2p synchronization faults, we can reformat $\mat{A}\vec{x} = \vec{b}$ to include the actual faults into the vector of unknowns:
\begin{equation}
  \mat{A}' \vec{x}' = \vec{b}', \text{where } \vec{x}' = \left( \begin{array}{c} \hat{\vecg{\delta}} \\ \hat{\vec{e}} \\ \vec{e} \end{array} \right), \mat{A}' = \left( \mat{A}_1 \mat{A}_2 \mat{A}_3 \right),
\end{equation}
$\mat{A}_3 \in \mathbb{R}^{\frac{N(N-1)}{2} \times K}$ is a matrix corresponding to $\vec{e}$, $\vec{b}' \in \mathbb{R}^{\frac{N(N-1)}{2}}$ consists of the actual clock offsets.

The re-vectorization of the equation systems in Eqs.~(\ref{eq:4node-system0}), (\ref{eq:4node-system1}), and (\ref{eq:4node-system2}) are respectively given by
\begin{equation}
  \small
  \left(
    \begin{array}{ccccc}
      1 & 0 & 0 & 1 & 0 \\
      0 & 1 & 0 & 0 & -1 \\
      0 & 0 & 1 & 0 & 0 \\
      -1 & 1 & 0 & 0 & 0 \\
      0 & -1 & 1 & 0 & 0 \\
      -1 & 0 & 1 & 0 & 0
    \end{array}
  \right)
  \!\!
  \left(
    \begin{array}{c}
      \hat{\delta}_{10}\\
      \hat{\delta}_{20}\\
      \hat{\delta}_{30}\\
      \hat{e}_{10} \\
      e_{20}
    \end{array}
  \right)
  \!=\!
  \left(
    \begin{array}{c}
      \delta_{10} \\
      \delta_{20} \\
      \delta_{30} \\
      \delta_{20} - \delta_{10} \\
      \delta_{30} - \delta_{20} \\
      \delta_{30} - \delta_{10}
    \end{array}
  \right),
  \label{eq:4node-system0-reformat}
\end{equation}

\begin{equation}
  \footnotesize
    \left(
      \begin{array}{cccccc}
        1 & 0 & 0 & 0 & -\!1 & 0 \\
        0 & 1 & 0 & 0 & 0 & -1 \\
        0 & 0 & 1 & 1 & 0 & 0 \\
        -1 & 1 & 0 & 0 & 0 & 0 \\
        -1 & 0 & 1 & 0 & 0 & 0 \\
        0 & -1 & 1 & 0 & 0 & 0
      \end{array}
    \right)
    \!\!
    \left(
      \begin{array}{c}
        \hat{\delta}_{10} \\
        \hat{\delta}_{20} \\
        \hat{\delta}_{30} \\
        \hat{e}_{30} \\
        e_{10} \\
        e_{20}
      \end{array}      
    \right)
    \!\!=\!\!
    \left(
      \begin{array}{c}
        \delta_{10} \\
        \delta_{20} \\
        \delta_{30} \\
        \delta_{20} \!-\! \delta_{10} \\
        \delta_{30} \!-\! \delta_{10} \\
        \delta_{30} \!-\! \delta_{20}
      \end{array}
    \right),
    \label{eq:4node-system1-reformat}
  \end{equation}

  \begin{equation}
    \scriptsize
      \left(
      \begin{array}{ccccccc}
        1 & 0 & 0 & 1 & 0 & -\!1 & 0 \\
        0 & 1 & 0 & 0 & 0 & 0 & -\!1 \\
        0 & 0 & 1 & 0 & 1 & 0 & 0 \\
        -\!1 & 1 & 0 & 0 & 0 & 0 & 0 \\
        -\!1 & 0 & 1 & 0 & 0 & 0 & 0 \\
        0 & -\!1 & 1 & 0 & 0 & 0 & 0
      \end{array}
    \right)
    \!\!
    \left(
      \begin{array}{c}
        \hat{\delta}_{10} \\
        \hat{\delta}_{20} \\
        \hat{\delta}_{30} \\
        \hat{e}_{10} \\
        \hat{e}_{30} \\
        e_{10} \\
        e_{20}
      \end{array}      
    \right)
    \!\!=\!\!
    \left(
      \begin{array}{c}
        \delta_{10} \\
        \delta_{20} \\
        \delta_{30} \\
        \delta_{20} \!-\! \delta_{10} \\
        \delta_{30} \!-\! \delta_{10} \\
        \delta_{30} \!-\! \delta_{20}
      \end{array}
    \right)\!\!,
    \label{eq:4node-system2-reformat}
  \end{equation}


  In Eq.~(\ref{eq:4node-system0-reformat}), $\mathrm{rank}(\mat{A}' | \vec{b}) = \mathrm{rank}(\mat{A}')$ and $\mat{A}'$ has full column rank. Thus, Eq.~(\ref{eq:4node-system0-reformat}) has a unique solution, which is $\{\hat{\delta}_{10} = \delta_{10}, \hat{\delta}_{20} = \delta_{20}, \hat{\delta}_{30} = \delta_{30}, \hat{e}_{10} = 0, e_{20} = 0 \}$. This is consistent with the observation in the proof sketch of Proposition~\ref{exmp:2} that $\mat{A}\vec{x} = \vec{b}$ has no solutions if $e_{20} \neq 0$.

  In Eq.~(\ref{eq:4node-system1-reformat}), $\mathrm{rank}(\mat{A}' | \vec{b}) = \mathrm{rank}(\mat{A}')$ and $\mat{A}'$ is not full column ranked. Thus, $\mat{A}'\vec{x}' = \vec{b}'$ has an infinite number of solutions. Applying Guassian elimination to Eq.~(\ref{eq:4node-system1-reformat}) gives $\{\hat{\delta}_{10} = \delta_{10} + e_{10}, \hat{\delta}_{20} = \delta_{20} + e_{10},  \hat{\delta}_{30} = \delta_{30} + e_{10}, \hat{e}_{30} = - e_{10}, e_{20} = e_{10} \}$, where $e_{10}$ and $e_{20}$ are considered as variables in $\mat{A}'\vec{x}' = \vec{b}'$, not as constants in $\mat{A}\vec{x} = \vec{b}$. The above result means that there exist non-zero $e_{10}$ and $e_{20}$ such that the solution of $\mat{A}\vec{x} = \vec{b}$ is wrong.

  In Eq.~(\ref{eq:4node-system2-reformat}), $\mathrm{rank}(\mat{A}' | \vec{b}) = \mathrm{rank}(\mat{A}')$ and $\mat{A}'$ is not full column ranked. Thus, $\mat{A}'\vec{x}' = \vec{b}'$ has an infinite number of solutions. Applying Gaussian elimination to Eq.~(\ref{eq:4node-system2-reformat}) gives the relationship derived in the proof of Proposition~\ref{exmp:3}, i.e., $\{\hat{\delta}_{10} = \delta_{10} + e_{20}, \hat{\delta}_{20} = \delta_{20} + e_{20}, \hat{\delta}_{30} = \delta_{30} + e_{20}, \hat{e}_{10} = e_{10} - e_{20}, \hat{e}_{30} = -e_{20}\}$, where $e_{10}$ and $e_{20}$ are considered as variables in $\mat{A}'\vec{x}' = \vec{b}'$, not as constants in $\mat{A}\vec{x} = \vec{b}$. The above result also shows that there exist non-zero $e_{10}$ and $e_{20}$ such that the solution of $\mat{A}\vec{x} = \vec{b}$ is wrong.

  From the above examples, we can see that the solution to re-vectorization captures the condition that the actual faults need to satisfy such that the $\mat{A}\vec{x}=\vec{b}$ will give wrong solutions.



\section{Bounds of Maximum Resilience}

\subsection{Lower Bound of Maximum Resilience}
\label{subsec:symbolic}


In this section, we first develop two lemmas, Lemma~\ref{lem:1} and Lemma~\ref{lem:2}. The proof of Lemma~\ref{lem:2} uses Lemma~\ref{lem:1}. Then, we prove Proposition~\ref{prop:sufficient} using Lemma~\ref{lem:2}. Proposition~\ref{prop:sufficient} gives a sufficient condition that a system is $K$-resilient. This condition can be used to compute the lower bound of maximum resilience for any $N$-node system.

  \begin{lemma}
    $\mat{A}' \vec{x}' = \vec{b}'$ always has one or more solutions. When $\mat{A}'$ has full column rank, the original $\mat{A} \vec{x} = \vec{b}$ either has no solutions or has a unique correct solution.
    \label{lem:1}
  \end{lemma}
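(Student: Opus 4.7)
The plan is to handle the two claims separately, the first by an explicit construction and the second by reducing to uniqueness via the first. For the existence claim I would exhibit the trivial assignment $\hat{\vecg{\delta}} = \vecg{\delta}$, $\hat{\vec{e}} = \mathbf{0}$, $\vec{e} = \mathbf{0}$ and verify row by row that it satisfies $\mat{A}' \vec{x}' = \vec{b}'$. Each row encodes either $\hat{\delta}_{j0} \pm \hat{e}_{j0} \pm e_{j0} = \delta_{j0}$ for a session with the reference node, or $\hat{\delta}_{i0} - \hat{\delta}_{j0} \pm \hat{e}_{ij} \pm e_{ij} = \delta_{ij}$ for an inter-node session, and both collapse to identities under the trivial assignment by the offset transitivity $\delta_{ij} = \delta_{i0} - \delta_{j0}$. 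This construction does not depend on which sessions are assumed faulty or actually faulty, so existence holds unconditionally.

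For the second claim, I would first note that $\mat{A}$ is obtained from $\mat{A}'$ by deleting the $K$ columns forming $\mat{A}_3$, so the full column rank of $\mat{A}'$ is inherited by $\mat{A}$. I would then establish the bookkeeping identity $\vec{b} = \vec{b}' - \mat{A}_3 \vec{e}^*$, where $\vec{e}^*$ denotes the vector of true fault values: by construction $\vec{b}'$ uses the true clock offsets, whereas $\vec{b}$ differs from it by $+e_{ij}^*$ in each row corresponding to an actually faulty session, and the $\mat{A}_3$ column for that fault carries $-1$ in exactly that row. Consequently, any solution $(\hat{\vecg{\delta}}_0, \hat{\vec{e}}_0)$ of $\mat{A}\vec{x} = \vec{b}$ lifts to $(\hat{\vecg{\delta}}_0, \hat{\vec{e}}_0, \vec{e}^*)$ as a solution of $\mat{A}'\vec{x}' = \vec{b}'$.

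Combining these observations, when $\mat{A}'$ has full column rank the first claim forces $\mat{A}'\vec{x}' = \vec{b}'$ to have exactly one solution, which must be the trivial $(\vecg{\delta}, \mathbf{0}, \mathbf{0})$ constructed above. Any lifted solution must equal this, forcing $\vec{e}^* = \mathbf{0}$ together with $(\hat{\vecg{\delta}}_0, \hat{\vec{e}}_0) = (\vecg{\delta}, \mathbf{0})$. Thus either all actual faults vanish, in which case $(\vecg{\delta}, \mathbf{0})$ is the unique and correct solution of $\mat{A}\vec{x} = \vec{b}$, or some $e_{ij}^* \neq 0$, in which case no lift exists and $\mat{A}\vec{x} = \vec{b}$ is inconsistent.

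The main obstacle I anticipate is pinning down the sign convention in the definition of $\mat{A}_3$ relative to how the measurements $\widetilde{\delta}_{ij} = \delta_{ij} + e_{ij}$ enter $\vec{b}$, so that $\vec{b} = \vec{b}' - \mat{A}_3 \vec{e}^*$ genuinely holds column by column. I would sanity-check this against Eqs.~(\ref{eq:4node-system0})--(\ref{eq:4node-system0-reformat}) before invoking it in general; once the identification is fixed, the remainder is a short linear-algebra argument that depends only on uniqueness of solutions when the coefficient matrix has full column rank.
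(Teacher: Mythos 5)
Your proposal is correct and follows essentially the same route as the paper: exhibit the trivial assignment $\hat{\vecg{\delta}}=\vecg{\delta}$, $\hat{\vec{e}}=\vec{0}$, $\vec{e}=\vec{0}$ as a solution of $\mat{A}'\vec{x}'=\vec{b}'$, then use full column rank to force this to be the unique solution and conclude that $\mat{A}\vec{x}=\vec{b}$ is either inconsistent (when some actual fault is non-zero) or has the unique correct solution. Your explicit lifting identity $\vec{b}=\vec{b}'-\mat{A}_3\vec{e}^*$ just formalizes the step the paper states informally as a "conflict" with the unique solution, and your sign check against Eq.~(\ref{eq:4node-system0-reformat}) confirms it holds.
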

  \begin{proof}
    The $\vec{x}'$ satisfying (i) $\hat{\delta}_{j0} = \delta_{j0}$, $\forall j \in [1, N-1]$, (ii) $\hat{\vec{e}} = \vec{0}$, and (iii) $\vec{e}=\vec{0}$ must be a solution. We denote this solution as $\vec{x}_0'$. As shown in previous examples, $\mat{A}'\vec{x}'=\vec{b}'$ can have an infinite number of solutions. Therefore, $\mathrm{rank}(\mat{A}' | \vec{b}') = \mathrm{rank}(\mat{A}')$ always holds and $\mat{A}'\vec{x}' = \vec{b}'$ always has one or more solutions.

    When $\mat{A}'$ has full column rank, $\mat{A}'\vec{x}'=\vec{b}'$ has a unique solution that must be $\vec{x}_0'$. The $\vec{e}=\vec{0}$ in this solution means that the original $\mat{A}\vec{x} = \vec{b}$ does not allow any p2p synchronization fault. We now consider two cases. First, in the presence of any p2p synchronization fault, the $\mat{A}\vec{x} = \vec{b}$ must have no solutions; otherwise, the solution of $\mat{A}\vec{x} = \vec{b}$ conflicts with the unique solution of $\vec{A}'\vec{x}'=\vec{b}'$ with $\vec{e}=\vec{0}$. Second, in the absence of synchronization fault, the unique solution $\vec{x}_0'$ encompasses the unique correct solution of $\mat{A}\vec{x}=\vec{b}$.
  \end{proof}

We say that an estimated p2p synchronization fault is correctly positioned if the corresponding p2p synchronization session is truly faulty. For example, in Eq.~(\ref{eq:4node-system2-reformat}), the $\hat{e}_{10}$ is correctly positioned, but the $\hat{e}_{30}$ is not correctly positioned.

\begin{lemma}
  When $\mathrm{rank}(\mat{A}') = N - 1 + k + K - l$, where $l \in [0, k]$ is the number of correctly positioned estimated p2p synchronization faults, the original $\mat{A}\vec{x} = \vec{b}$ either has no solutions or has a unique correct solution.
  \label{lem:2}
\end{lemma}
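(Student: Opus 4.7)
The plan is to combine Lemma~\ref{lem:1} with an explicit description of $\ker(\mat{A}')$, parameterize every solution of $\mat{A}'\vec{x}' = \vec{b}'$, and then check which of these parameterized solutions can meet the constraint that the $\vec{e}$-block takes the actual fault values. Lemma~\ref{lem:1} already furnishes one particular solution, namely $\vec{x}_0' = (\vecg{\delta}^\intercal, \vec{0}^\intercal, \vec{0}^\intercal)^\intercal$, where $\vecg{\delta}$ collects the true clock offsets.

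The first step is to build a basis of $\ker(\mat{A}')$. Whenever an estimated fault $\hat{e}_{ij}$ is correctly positioned, its column in $\mat{A}_2$ has a single nonzero entry $+1$ in the row corresponding to session $n_i \leftrightarrow n_j$, while the column of $\mat{A}_3$ for the matching actual fault $e_{ij}$ has $-1$ in that same row and zeros elsewhere, as can be read off from Eqs.~(\ref{eq:4node-system0-reformat})--(\ref{eq:4node-system2-reformat}). Thus the vector $\vec{v}_{ij}$ with $\hat{e}_{ij} = 1$, $e_{ij} = 1$, and all other coordinates zero lies in $\ker(\mat{A}')$. The $l$ such vectors, indexed by the correctly positioned pairs, have pairwise disjoint supports and are therefore linearly independent; since the rank hypothesis gives $\dim\ker(\mat{A}') = l$, they form a basis of $\ker(\mat{A}')$. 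Consequently every solution of $\mat{A}'\vec{x}' = \vec{b}'$ can be written as $\vec{x}_0' + \sum_{(i,j)\in\mathcal{L}} c_{ij}\vec{v}_{ij}$, where $\mathcal{L}$ is the set of correctly positioned pairs.

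The next step is to translate this into a statement about $\mat{A}\vec{x} = \vec{b}$. By construction of the re-vectorization, $(\hat{\vecg{\delta}},\hat{\vec{e}})$ solves $\mat{A}\vec{x} = \vec{b}$ if and only if some extension $(\hat{\vecg{\delta}},\hat{\vec{e}},\vec{e})$ solves $\mat{A}'\vec{x}' = \vec{b}'$ with $\vec{e}$ equal to the actual fault values. Reading off the $\vec{e}$-coordinates of the parameterization, at a correctly positioned index $(i,j)\in\mathcal{L}$ the coordinate of $\vec{e}$ equals the free parameter $c_{ij}$, while at any index outside $\mathcal{L}$ the coordinate of $\vec{e}$ is identically zero. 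If $l<K$, at least one actual fault lies outside $\mathcal{L}$; its true nonzero value cannot be matched by the zero coordinate, so $\mat{A}\vec{x} = \vec{b}$ is inconsistent. If $l=K$, every actual fault lies in $\mathcal{L}$; setting $c_{ij}$ equal to the true value of $e_{ij}$ for each $(i,j)\in\mathcal{L}$ uniquely determines the solution, yielding $\hat{\vecg{\delta}} = \vecg{\delta}$, $\hat{e}_{ij}$ equal to the true $e_{ij}$ for correctly positioned estimates, and $\hat{e}_{ij} = 0$ for the remaining $k-l$ estimated positions. This is exactly the correct solution, completing both cases.

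The main obstacle I expect is the bookkeeping: fixing the sign convention between $\mat{A}_2$ and $\mat{A}_3$ so that each correctly positioned pair really contributes a cancellation, and arguing crisply that the $l$ explicit null vectors span $\ker(\mat{A}')$ rather than merely sitting inside it. The rank hypothesis reduces the spanning claim to a dimension count, after which the case split on $l<K$ versus $l=K$ is routine.
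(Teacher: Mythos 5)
Your proof is correct and follows essentially the same route as the paper's: both identify the solution set of $\mat{A}'\vec{x}'=\vec{b}'$ as an $l$-dimensional affine family parameterized by the values of the correctly positioned faults (the paper writes this set $\mathcal{S}$ down directly and matches its dimension against the rank hypothesis, while you obtain the same family by exhibiting the $l$ disjointly supported kernel vectors and invoking rank--nullity), and both then conclude by noting that any actual fault without a matching estimated fault is forced to zero, giving inconsistency when $l<K$ and the unique correct solution when $l=K$. Your explicit kernel basis merely makes the paper's assertion that $\mathcal{S}$ is a solution subspace more transparent; the substance of the argument is identical.
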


\begin{proof}
  We define three sets: (1) $\mathcal{E}$ is the set of the subscripts of the estimated p2p synchronization faults, (2) $\mathcal{A}$ is the set of the subscripts of the actual p2p synchronization faults, (3) $\mathcal{C}$ is the set of the subscripts of the correctly positioned estimated p2p synchronization faults.

  When $l = 0$, the given condition $\mathrm{rank}(\mat{A}')=N-1+k+K-l$ ensures that $\mat{A}'$ has full column rank. From Lemma~\ref{lem:1}, $\mat{A}\vec{x} = \vec{b}$ has either no solutions or a unique correct solution.

The rest of the proof considers $l \in (0, k]$. We now prove that the $\mathcal{S} = \{\hat{\delta}_{i0} = \delta_{i0}, \hat{e}_{mn} = e_{mn}, \hat{e}_{pq} = 0, e_{xy} = 0 | \forall i \in [1, N-1], \forall mn \in \mathcal{C}, \forall pq \in \mathcal{E} \setminus \mathcal{C}, \forall xy \in \mathcal{A} \setminus \mathcal{C}\}$ is the entire solution space of $\mat{A}'\vec{x}'=\vec{b}'$. First, clearly, $\mathcal{S}$ is a solution subspace of $\mat{A}'\vec{x}' = \vec{b}'$, because it is the correct solution to a system with $l$ actual non-zero p2p synchronization faults and correct distribution of the estimated p2p synchronization faults. The dimension of $\mathcal{S}$ is the cardinality of $\mathcal{C}$ (i.e., $l$), because only the $\{e_{mn} | \forall mn \in \mathcal{C} \}$ are the free variables. Second, as $\mathrm{rank}(\mat{A}')=N - 1 + k + K - l$ and the number of variables is $N-1+k+K$, the dimension of the entire solution space is $(N-1+k+K) - (N - 1 + k + K - l) = l$. From the above two statements, the solution subspace and the entire solution space of $\mat{A}'\vec{x}'=\vec{b}'$ have the same dimension. From the uniqueness of the solution space of linear equation system, the $\mathcal{S}$ is the entire solution space of $\mat{A}'\vec{x}'=\vec{b}'$.

The $\mathcal{S}$'s condition $e_{xy}=0$, $\forall xy \in \mathcal{A} \setminus \mathcal{C}$ means that the original $\mat{A}\vec{x} = \vec{b}$ does not allow any actual p2p synchronization fault without a corresponding estimated p2p synchronization fault. In the absence of $K$ actual p2p synchronization faults, the unique solution $\mathcal{S}$ encompasses the unique correct solution of $\mat{A}\vec{x} = \vec{b}$. In the presence of $K$ actual p2p synchronization faults, there are two cases.
\begin{enumerate}
\item If $l \!=\! k \!=\! K$, $\mathcal{S}$ is the unique correct solution of $\mat{A}\vec{x} = \vec{b}$;
\item Otherwise, we must have $l < K$. As a result, the $\mat{A}\vec{x}=\vec{b}$ must have no solutions, because otherwise the fact that $\mathcal{S}$ allows $l$ non-zero actual p2p synchronization faults only conflicts with the fact that there are $K$ non-zero actual p2p synchronization faults.
\end{enumerate}
\end{proof}

Based on Lemma~\ref{lem:2}, the following proposition can be used to compute the lower bound of maximum resilience.

  \begin{proposition}
    A system is $K$-resilient if $\forall k \in [0, K]$, for any distribution of the $K$ actual p2p synchronization faults and any distribution of the $k$ estimated p2p synchronization faults,
    $\mathrm{rank}(\mat{A}') = N - 1 + k + K - l$, where $l \in [0, k]$ is the number of correctly positioned estimated p2p synchronization faults.
    \label{prop:sufficient}
  \end{proposition}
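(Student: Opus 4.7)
The plan is to derive Proposition~\ref{prop:sufficient} as an immediate consequence of Lemma~\ref{lem:2} by mapping its conclusion onto each clause of Definition~\ref{def:resilience-condition}. The hypothesis of the proposition is precisely the hypothesis of Lemma~\ref{lem:2} applied across every $k \in [0, K]$ and every choice of actual and estimated fault distributions. Thus, in each such configuration, Lemma~\ref{lem:2} guarantees that $\mat{A}\vec{x} = \vec{b}$ either has no solutions or has a unique correct solution. All that remains is to determine which alternative occurs in each case and to check it against the $K$-resilience condition.

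First I would dispatch the range $k \in [0, K)$. Since $l \le k$ and $k < K$, we have $l < K$. Revisiting the end of the proof of Lemma~\ref{lem:2}, the unique-correct-solution alternative only arises when $l = k = K$; for any other configuration the lemma actually forces $\mat{A}\vec{x} = \vec{b}$ to have no solutions. Consequently, for every $k < K$ and every distribution of the actual and estimated faults, Eq.~(\ref{eq:error-correction}) admits no solution, which is exactly condition 1) of Definition~\ref{def:resilience-condition}.

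Next I would handle the boundary case $k = K$ by splitting on whether the distribution of the $k$ estimated faults coincides with the distribution of the $K$ actual faults. If the two distributions agree, then every estimated fault is correctly positioned, so $l = k = K$, and Lemma~\ref{lem:2} yields a unique solution which, by construction of $\mathcal{S}$ in that lemma, takes $\hat{\delta}_{j0} = \delta_{j0}$ and $\hat{e}_{ij} = e_{ij}$; this establishes condition 2)-a). If the two distributions differ, then at least one estimated fault is mis-positioned and one actual fault is unmatched, so $l < k = K$, and Lemma~\ref{lem:2} again rules out the unique-correct-solution alternative, delivering condition 2)-b). Combining the three cases reproduces Definition~\ref{def:resilience-condition} verbatim, proving $K$-resilience.

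The main obstacle I expect is not technical but bookkeeping: carefully lining up the two meanings of ``correct solution'' (the one used inside Lemma~\ref{lem:2}, derived from the full solution set $\mathcal{S}$ of the re-vectorized system, and the one used inside Definition~\ref{def:resilience-condition}, which demands $\hat{\delta}_{j0} = \delta_{j0}$ for the offsets together with $\hat{e}_{ij} = e_{ij}$ exactly on the truly faulty sessions). One must also be explicit that ``the solution actually exists'' in case 2)-a) follows from the construction of $\mathcal{S}$ rather than merely from the rank condition, so that Definition~\ref{def:resilience-condition} is not just ruled in by elimination. These are the only points where sloppiness could hide a gap; otherwise the proof is a pure restatement of Lemma~\ref{lem:2}.
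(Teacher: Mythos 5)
Your proposal is correct and follows essentially the same route as the paper: invoke Lemma~\ref{lem:2} to obtain the dichotomy (no solutions versus a unique correct solution), then match the three cases of Definition~\ref{def:resilience-condition} by observing that the correct-solution alternative can only occur when $l=k=K$, i.e., when the estimated and actual fault distributions coincide. The only cosmetic difference is that you correctly cite Lemma~\ref{lem:2} throughout, whereas the paper's proof text mistakenly refers to Lemma~\ref{lem:1} for the dichotomy even though the rank hypothesis $\mathrm{rank}(\mat{A}') = N-1+k+K-l$ is that of Lemma~\ref{lem:2}.
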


  \begin{proof}
    As $\mathrm{rank}(\mat{A}') = N - 1 + k + K - l$, from Lemma~\ref{lem:1}, the original $\mat{A}\vec{x}=\vec{b}$ either has no solutions or has a unique correct solution. We now analyze the cases considered in Definition~\ref{def:resilience-condition}:
    \begin{enumerate}
    \item When $k \in [0, K)$, since $k < K$, the solution of $\mat{A}\vec{x}=\vec{b}$ cannot be correct. Thus, the $\mat{A}\vec{x}=\vec{b}$ has no solutions.
    \item When $k=K$,
      \begin{enumerate}
      \item if the distribution of the $k$ estimated p2p synchronization faults is identical to the distribution of the actual synchronization faults, as the statement that $\mat{A}\vec{x}=\vec{b}$ has no solution must not be true (because the correct solution is a solution), $\mat{A}\vec{x}=\vec{b}$ must have a unique (and correct) solution.
      \item otherwise, since the distributions are different, the solution of $\mat{A}\vec{x}=\vec{b}$ cannot be correct. Thus, the $\mat{A}\vec{x}=\vec{b}$ has no solutions.
      \end{enumerate}
    \end{enumerate}
    In summary, $\mathrm{rank}(\mat{A}') = N - 1 + k + K - l$ ensures that the $K$-resilience condition is satisfied.
  \end{proof}

  \renewcommand{\algorithmicrequire}{\textbf{Given:}}
\renewcommand{\algorithmicensure}{\textbf{Output:}}
\renewcommand{\algorithmiccomment}[1]{// #1}

\begin{algorithm}[t]
  \caption{Compute a lower bound of maximum resilience}
\label{alg:lower-bound}
\small
\begin{algorithmic}[1]
\REQUIRE The number of nodes $N$
\ENSURE A lower bound of maximum resilience

\STATE $K=0$
\WHILE{$K \le (N-2)$}
\label{line:n-2}
\FOR{each distribution of the $K$ actual p2p synchronization faults among the $\frac{N(N-1)}{2}$ p2p synchronization sessions}
\label{line:foreach1}
\STATE $k=0$
\WHILE{$k\leq K$}
\FOR{each distribution of the $k$ estimated faults among the $\frac{N(N-1)}{2}$ p2p synchronization sessions}
\label{line:foreach2}
\STATE determine the value of $l$ (i.e., the number of correctly positioned estimated faults)
\IF{$\mathrm{rank}(\mat{A}')\neq N - 1 + k + K - l$}
\label{line:use-of-proposition}
\RETURN{$K-1$}
\ENDIF
\ENDFOR
\STATE $k=k+1$
\ENDWHILE
\ENDFOR
\label{line:endforeach}
\STATE $K = K + 1$
\ENDWHILE
\end{algorithmic}
\end{algorithm}

\begin{table*}
  \normalsize
  \begin{minipage}{\textwidth}
    \begin{equation}
      \mat{A} = 
      \begin{blockarray}{ccccccccc}
        &\hat{\delta}_{10}& \cdots &\hat{\delta}_{(N-1)0}&\hat{e}_{10}& \hat{e}_{12} & \cdots & \hat{e}_{1(N-1)} & \cdots \\
        \begin{block}{c(cccccccc)}
          n_0 \leftrightarrow n_1       &1                & \cdots                & \cdot                &1           & 0 & \cdots  & 0 & \cdots \\
          n_0 \leftrightarrow n_2       &0                & \cdots                & \cdot                &0           & 0  & \cdots & 0 & \cdots \\
          n_i \leftrightarrow n_j, \forall i,j \neq 1        & \vdots                 & \ddots                & \vdots                & \vdots           & \vdots & \ddots & \vdots & \ddots \\
          n_{N-2} \leftrightarrow n_{N-1}       &0                &\cdots                & \cdot                &0           & 0 & \cdots & 0 & \cdots \\
          n_1 \leftrightarrow n_2       & -1                & \cdots                & \cdot                &0           & 1 & \cdots & 0 & \cdots \\
          \vdots        & \vdots                 & \ddots                & \vdots                & \vdots           & \vdots  & \ddots & \vdots & \ddots \\
          n_1 \leftrightarrow n_{N-1}       & -1                & \cdots                & \cdot                &0           & 0 & \cdots & 1 & \cdots \\
        \end{block}
      \end{blockarray}.
      \label{eq:prop1}
    \end{equation}
    \hrule
  \end{minipage}
\end{table*}

  Based on Proposition~\ref{prop:sufficient}, Algorithm~\ref{alg:lower-bound} computes a lower bound of maximum resilience for any $N$-node system. Specifically, by starting with no synchronization faults (i.e., $K=0$), it increases $K$ by one in each step of the outer loop to check whether the $N$-node system is $K$-resilient. The condition of $K \le (N-2)$ in Line~\ref{line:n-2} is from Proposition~\ref{prop:n-2} that the system is not $K$-resilient if $K > (N-2)$. The loops from Line~\ref{line:foreach1} to Line~\ref{line:foreach2} will generate all possible combinations of the distributions of actual and estimated synchronization faults. In Line~\ref{line:use-of-proposition}, we check whether the sufficient condition in Proposition~\ref{prop:sufficient} is met. If not, the current value of $K$ has already exceeded the lower bound of maximum resilience. Thus, the algorithm returns $K-1$ as the lower bound.

Table~\ref{tab:lower-bound} shows the results computed by Algorithm~\ref{alg:lower-bound} for $N$ up to 12. We can see that the lower bound of maximum resilience is a non-decreasing function of $N$, which is consistent with intuition. We also compute the lower bound of tolerance as $f_l(N) / \frac{N(N-1)}{2}$, i.e., the percentage of the faulty p2p synchronization sessions to ensure correct network clock synchronization. The last row of Table~\ref{tab:lower-bound} shows the lower bound of tolerance.



\subsection{Upper Bounds of Maximum Resilience}
\label{subsec:upper-bounds}

\begin{proposition}
  $f_u(N) = N - 2$ is an upper bound of maximum resilience, i.e., any $N$-node system is not $K$-resilient when $K > (N-2)$.
  \label{prop:n-2}
\end{proposition}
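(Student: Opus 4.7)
The plan is to exhibit, for each $K > N-2$, a concrete distribution of $K$ nonzero p2p synchronization faults on which Algorithm~\ref{alg:error-correction} is forced to return a wrong answer. The guiding intuition is that faulting all $N-1$ sessions incident to a single node with consistent values is indistinguishable from shifting that node's clock offset, so the algorithm cannot tell the two apart and will terminate early with the fewer-fault (and incorrect) interpretation.

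Concretely, I would fix any nonzero $e$, designate $n_1$ as the target node, and set $e_{10}=e$ on the session $n_0 \leftrightarrow n_1$ together with $e_{j1}=-e$ on $n_1 \leftrightarrow n_j$ for every $j \in \{2,\ldots,N-1\}$. If $K > N-1$, I would additionally place $K-(N-1)$ arbitrary nonzero faults $e_{ij}$ on any selection of sessions not incident to $n_1$; this produces a valid distribution of $K$ nonzero actual faults.

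The crux of the argument is then to verify that the ``shifted-$n_1$'' assignment $\hat{\delta}_{10}=\delta_{10}+e$, $\hat{\delta}_{j0}=\delta_{j0}$ for $j\geq 2$, $\hat{e}_{ij}=e_{ij}$ on the $K-(N-1)$ extra sessions, and $\hat{e}_{ij}=0$ elsewhere, satisfies Eq.~(\ref{eq:error-correction}). This is a short case check on the three row-types: the $n_j \leftrightarrow n_0$ rows trivially match; the $n_j \leftrightarrow n_1$ rows for $j\geq 2$ match via the cancellation $e+(-e)=0$, giving $\widetilde{\delta}_{j1}=\delta_{j0}-\delta_{10}-e=\hat{\delta}_{j0}-\hat{\delta}_{10}$; and the rows between two non-$n_1$ nodes are handled by setting $\hat{e}_{ij}=e_{ij}$. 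This assignment uses only $k=K-(N-1)$ estimated faults, and $k<K$ since $K>N-2$ gives $K\geq N-1$ with $N-1\geq 1$.

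Consequently, Algorithm~\ref{alg:error-correction} terminates at some $k\leq K-(N-1)<K$ with $\hat{\delta}_{10}=\delta_{10}+e\neq \delta_{10}$, which violates Definition~\ref{def:resilience-condition} and proves that the system is not $K$-resilient. The main (and only) obstacle is the cancellation check in the $n_1 \leftrightarrow n_j$ rows; the rest is bookkeeping. No argument is needed to rule out still earlier terminations, since any termination at $k<K$ already breaks $K$-resilience.
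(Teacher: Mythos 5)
Your proof is correct, but it reaches the contradiction by a different mechanism than the paper. Both proofs use the same counterexample skeleton (all $N-1$ sessions incident to one node are faulty), but the paper considers the iteration $k=K$ in which the estimated fault distribution matches the actual one, and shows that the column of $\mat{A}$ corresponding to $\hat{\delta}_{10}$ lies in the span of the columns for $\hat{e}_{10},\hat{e}_{12},\ldots,\hat{e}_{1(N-1)}$; hence $\mat{A}$ is rank-deficient, $\mat{A}\vec{x}=\vec{b}$ has infinitely many solutions, and clause 2)-a) of Definition~\ref{def:resilience-condition} fails --- and this holds for \emph{arbitrary} non-zero faults on those sessions. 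You instead choose the faults to realize a consistent ``clock shift'' of $n_1$ and exhibit an explicit spurious solution already at $k=K-(N-1)<K$, violating clause 1) and forcing early termination. The two are dual views of the same degeneracy (your absorbable shift is precisely a vector in the kernel the paper identifies), but yours is more elementary (no rank computation) and pins down a concrete wrong output, at the cost of needing specially chosen fault values --- which is fine, since Definition~\ref{def:resilience} only requires one uncorrectable fault set. Two small tightenings: (i) your last sentence should appeal to Definition~\ref{def:resilience} rather than Definition~\ref{def:resilience-condition}, since the latter is stated only as a sufficient condition for resilience, and the clean way to close is the pigeonhole observation that any termination at $k<K$ leaves some actual faulty session $n_i\leftrightarrow n_j$ with no estimated fault, whose row then forces $\hat{\delta}_{i0}-\hat{\delta}_{j0}\neq\delta_{i0}-\delta_{j0}$; (ii) you should note that the $K-(N-1)$ extra faults fit among the $(N-1)(N-2)/2$ sessions not incident to $n_1$, which holds whenever $K\le\frac{N(N-1)}{2}$.
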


\begin{proof}
  We prove by an counterexample where all the $N-1$ p2p synchronization sessions involving the node $n_1$ are faulty. The remaining $K - (N - 1)$ faulty p2p synchronization sessions may occur between any other node pairs. Consider that Algorithm~\ref{alg:error-correction} is testing a distribution of the $K$ p2p synchronization faults that is identical to the actual distribution. Since the true clock offsets and the true p2p synchronization faults must form a valid solution to the equation system, we have $\mathrm{rank}(\mat{A} | \vec{b}) = \mathrm{rank}(\mat{A})$.

The matrix $\mat{A}$ of the vectorized equation system is given by Eq.~(\ref{eq:prop1}). We add labels to help understanding each column's corresponding unknown to be solved and each row's corresponding p2p synchronization session. In the first column of $\mat{A}$ that corresponds to the clock offset estimate $\hat{\delta}_{10}$, the first element and the last $N-2$ elements that correspond to all p2p synchronization sessions involving $n_1$ are non-zeros; all other elements are zero. This column is a linear combination of the columns corresponding to $\hat{e}_{10}, \hat{e}_{12}, \ldots, \hat{e}_{1(N-1)}$. Thus, $\mat{A}$ is not full column ranked. Therefore, the equation system $\mat{A}\vec{x} = \vec{b}$ have an infinite number of solutions, which violates the resilience condition.
\end{proof}

\section{Conclusion and Future Work}
\label{sec:conclude}

This paper studies how many p2p synchronization faults that an $N$-node system can tolerate in achieving network clock synchronization. Table~\ref{tab:lower-bound} gives the lower bound of maximum resilience under certain settings of $N$. We also prove that $N-2$ is an upper bound of maximum resilience.

It is interesting to study the following issues not addressed in this paper:
\begin{enumerate}
\item The tight bound of maximum resilience is still an open issue. However, even if the upper bound given by Proposition~\ref{prop:n-2} is tight, the tolerance $(N-2) / \frac{N(N-1)}{2}$ still decreases with $N$ when $N \ge 4$. It suggests that increasing the number of nodes is not beneficial in terms of fault tolerance. In future work, we will study how to reduce the number of p2p synchronization sessions and examine whether doing so can improve the fault tolerance.
\item Algorithm~\ref{alg:error-correction} and our analysis do not exploit the property that each fault is a multiple of $T$. If this discrete property is used, intuitively, the fault tolerance can be improved.
\end{enumerate}


\bibliographystyle{IEEEtran}
\bibliography{ref}







\appendix

\subsection{Sensing-based P2P Synchronization Faults}
\label{appendix:reasons}

\subsubsection{Time fingerprinting approaches}

The studies \cite{viswanathan2016exploiting,li2017natural} show that the cycle length (i.e., $T$) of the power grid voltage \cite{viswanathan2016exploiting} and the associated powerline EMR \cite{li2017natural} has transient minute fluctuations over time in the order of 500 parts per million. The fluctuations at the same time in a geographic area served by the same power grid (e.g., a city) are nearly identical. Thus, a vector of successive cycle lengths is a time fingerprint. By matching a time fingerprint captured by $A$ against $B$'s historical time fingerprints that are timestamped respectively according to their clocks, the offset between $A$'s and $B$'s clocks can be estimated with a potential error of $nT$. If the time fingerprint length is sufficiently long, empirical zero error probability has been achieved \cite{viswanathan2016exploiting,li2017natural}. However, the possibility of errors cannot be precluded.


\subsubsection{Dirac-assisted NTP approaches}

As illustrated in Fig.~\ref{fig:principle}, the Dirac-assisted NTP transmits a {\em request} packet and a {\em reply} packet and records the transmission and reception timestamps $t_1$, $t_2$, $t_3$, and $t_4$ according to $A$'s and $B$'s clocks. It also computes the elapsed clock times for $t_1$, $t_2$, $t_3$, and $t_4$ from their respective last impulses (LIs) in the Dirac combs. These elapsed clock times (i.e., phases) are denoted by $\phi_1$, $\phi_2$, $\phi_3$, and $\phi_4$, as illustrated in Fig.~\ref{fig:principle}. The round-trip time (RTT) is $\mathrm{RTT} = (t_4 - t_1) - (t_3 - t_2)$. Define the {\em rounded phase differences} $\theta_q$ and $\theta_p$ (which correspond to the request and reply packets, respectively) as
\begin{equation*}
  \small
\theta_{q} \!=\! \left\{
\begin{array}{l}
\phi_2 \!-\! \phi_1, \text{ if } \phi_2 \!-\! \phi_1 \!\ge\! 0;\\
\phi_2 \!-\! \phi_1 \!+\! T, \text{ otherwise}.
\end{array}
\right.
\theta_{p} \!=\! \left\{
\begin{array}{l}
\phi_4 \!-\! \phi_3, \text{ if } \phi_4 \!-\! \phi_3 \!\ge\! 0;\\
\phi_4 \!-\! \phi_3 \!+\! T, \text{ otherwise}.
\end{array}
\right.
\end{equation*}
As analyzed in \cite{rabadi2017taming,yan2017}, we have $\mathrm{RTT} = \theta_q + \theta_p + (i + j) \cdot T$, $i,j \in \mathbb{Z}_{\ge 0}$, where the non-negative integers $i$ and $j$ are the numbers of elapsed periods of the external signals during the transmissions of the request and reply packets, respectively. For instance, in Fig.~\ref{fig:principle}, $i=2$ and $j=1$. Once the unknown $i$ or $j$ can be determined, the offset between $A$'s and $B$'s clocks can be estimated. However, solving $i$ and $j$ from $\mathrm{RTT} = \theta_q + \theta_p + (i + j) \cdot T$ is an integer-domain underdetermined problem that generally has multiple solutions. Arbitrarily choosing one of the candidate solutions will result in a clock offset estimation error of $nT$. The studies \cite{rabadi2017taming} and \cite{yan2017} proposed approaches to effectively reduce the number of candidate solution. However, it is challenging to ensure no ambiguity.

\end{document}